\newtheorem{remark}{Remark}
\newtheorem{theorem}{Theorem}
\newtheorem{lemma}{Lemma}
\newtheorem{corollary}{Corollary}
\newtheorem{proposition}{Proposition}
\algnewcommand{\IfThenElse}[3]{% \IfThenElse{<if>}{<then>}{<else>}
  \State \algorithmicif\ #1\ \algorithmicthen\ #2\ \algorithmicelse\ #3}
\begin{document}

\title{Coupled Phase-Shift STAR-RISs: A General Optimization Framework}

\author{
        Zhaolin~Wang,~\IEEEmembership{Graduate Student Member,~IEEE,}
        Xidong~Mu,~\IEEEmembership{Member,~IEEE,} \\
        Yuanwei~Liu,~\IEEEmembership{Senior Member,~IEEE,}
        and Robert Schober,~\IEEEmembership{Fellow,~IEEE}
        
\thanks{This work was supported in part by the PHC Alliance Franco-British Joint Research
Programme under Grant 822326028; and in part by the Engineering and Physical Sciences Research Council under Project EP/W035588/1.}
\thanks{Zhaolin Wang, Xidong Mu, and Yuanwei Liu are with the School of Electronic Engineering
and Computer Science, Queen Mary University of London, London E1 4NS, U.K. (e-mail: zhaolin.wang@qmul.ac.uk, xidong.mu@qmul.ac.uk, yuanwei.liu@qmul.ac.uk).}
\thanks{Robert Schober is with the Institute for Digital Communications, Friedrich-Alexander-Universität Erlangen-Nürnberg (FAU), 91054 Erlangen, Germany (e-mail: robert.schober@fau.de).}
}

\maketitle

\begin{abstract}
  A general optimization framework is proposed for simultaneously transmitting and reflecting reconfigurable intelligent surfaces (STAR-RISs) with coupled phase shifts, which converges to the Karush– Kuhn–Tucker (KKT) optimal solution under some mild conditions. More particularly, the amplitude and phase-shift coefficients of STAR-RISs are optimized alternately in closed form. To demonstrate the effectiveness of the proposed optimization framework, the throughput maximization problem is considered in a case study. It is rigorously proved that the KKT optimal solution is obtained. Numerical results confirm the effectiveness of the proposed optimization framework compared to baseline schemes.
\end{abstract}

\begin{IEEEkeywords}
  {C}oupled phase shifts, {Karush–Kuhn–Tucker (KKT)}, simultaneous transmission and reflection (STAR).
\end{IEEEkeywords}

\section{Introduction}
Recently, the novel concept of simultaneously transmitting and reflecting reconfigurable intelligent surfaces (STAR-RISs) has been proposed \cite{liu2021star, mu2021simultaneously}. In contrast to the conventional reflecting-only RISs that require the transmitter and receiver to be located on the same side \cite{huang2019reconfigurable}, STAR-RISs can transmit and reflect the incident signals to both sides, thus enabling a \emph{full-space} smart radio environment. 

Due to this unique benefit, STAR-RISs have attracted significant attention. However, most of the existing works on STAR-RISs assumed that the phase shifts of the transmission and reflection coefficients can be independently adjusted, which requires complex RIS hardware. Recently, the authors of \cite{xu2022star} have shown that for low-cost passive lossless STAR-RISs, the phase-shift coefficients for transmission and reflection are \emph{coupled} with each other, which introduces new design challenges. To tackle the coupled phase shifts, an element-wise optimization algorithm was proposed in \cite{liu2021simultaneously} for a system comprising a single-antenna transmitter and two users. For the weighted-sum-rate-maximization problem, an alternating optimization (AO) algorithm, which exhaustively searches sets of discrete amplitudes and phase shifts, and thus entails high complexity, was proposed in \cite{niu2022weighted}. Noteworthy, for the aforementioned algorithms, the optimality of the obtained solution cannot be guaranteed. This motivates us to propose a general optimization framework with provable optimality for STAR-RISs with coupled phase shifts.

In this article, we propose a general penalty-based optimization framework for STAR-RISs with coupled phase shifts, where the amplitude and phase-shift coefficients of the STAR-RISs are alternately updated in closed form exploiting the existing penalty dual decomposition (PDD) framework \cite{shi2020penalty}. We prove that the Karush–Kuhn–Tucker (KKT) optimal solution is obtained under some mild conditions. Next, as a case study, we consider throughput maximization in STAR-RIS-aided wireless communication systems to verify the effectiveness of the proposed algorithm. Our numerical results reveal that the throughput achieved with coupled phase-shift STAR-RISs is close to that of more complex independent phase-shift STAR-RISs.

\section{A Coupled Phase-Shift Model for STAR-RISs}

Let $s_n$ denote the incident signal for the $n$-th element of the considered $N$-element STAR-RIS, where $n \in \mathcal{N} = \{1,...,N\}$. Then, the corresponding transmitted signal $t_n$ and reflected signal $r_n$ are given by $t_n = \beta_{t,n} e^{j \phi_{t,n}} s_n$ and $ r_n = \beta_{r,n} e^{j \phi_{r,n}} s_n$, respectively, where $\beta_{t,n} \in [0,1] $ and $\beta_{r,n} \in [0,1]$ denote the real-valued transmission and reflection amplitudes, and $\phi_{t,n} \in [0, 2\pi)$ and $\phi_{r,n} \in [0, 2\pi)$ denote the corresponding phase shifts \cite{liu2021star}. In practice, the values of the amplitudes and phase shifts are determined by the corresponding electric and magnetic impedances of the STAR-RIS. According to the analysis in \cite{xu2022star}, passive lossless STAR-RISs have to meet the following two constraints:
\begin{subequations}
  \begin{align}
    &\beta_{t,n}^2 + \beta_{r,n}^2 = 1, \\
    \label{eqn:coupled}
    &\cos(\phi_{t,n} -  \phi_{r,n}) = 0.
  \end{align} 
\end{subequations}
Specifically, the first constraint stems from the law of energy conservation, and the second constraint referred to as the \emph{coupled phase-shift constraint} is due to the zero-valued real part of the electric and magnetic impedances of the passive STAR-RIS elements. Note that the coupled phase-shift constraint is a non-convex nonlinear equality constraint implying that the absolute phase-shift difference $|\phi_{t,n} -  \phi_{r,n}|$ can only assume values $\frac{\pi}{2}$ and $\frac{3\pi}{2}$. Using existing methods, it is challenging to transform \eqref{eqn:coupled} into a convex form. To overcome this obstacle, in the following section, a general optimization framework is proposed to handle the coupled phase-shift constraint efficiently.

\section{A General Optimization Framework for STAR-RISs with Coupled Phase Shifts}
Consider the following optimization problem:
\begin{subequations} \label{problem:original}
  \begin{align}
      \min_{ \mathbf{x} \in \mathcal{X}, \boldsymbol{\theta}_t, \boldsymbol{\theta}_r} & F(\mathbf{x}, \boldsymbol{\theta}_t, \boldsymbol{\theta}_r)   \\[-0.5em]
      \label{costraint:amp}
      \mathrm{s.t.} \quad & \beta_{t,n}^2 + \beta_{r,n}^2 = 1, \forall n \in \mathcal{N}, \\
      \label{costraint:phase}
      % & | \phi_{t,n} -  \phi_{r,n} | = \frac{\pi}{2} \text{ or } \frac{3\pi}{2}, \forall n \in \mathcal{N},
      & \cos(\phi_{t,n} -  \phi_{r,n}) = 0, \forall n \in \mathcal{N},
  \end{align}
\end{subequations}
where $\mathcal{X}$ denotes the convex feasible set of optimization variable $\mathbf{x}$, vector $\boldsymbol{\theta}_{i} = [\beta_{i,1} e^{j \phi_{i,1}},$ $\dots, \beta_{i,N} e^{j \phi_{i,N}}]^T, \forall i \in \{t,r\},$ contains the transmission and reflection coefficients, and $F(\mathbf{x}, \boldsymbol{\theta}_t, \boldsymbol{\theta}_r)$ is the convex objective function or the convex approximation of the original objective function for the specific problem considered. Depending on the application scenario, $F(\cdot)$ can be some utility function such as transmit power, weighted sum-rate, harvested energy, and sensing accuracy, and $\mathbf{x}$ can represent the transmit waveforms or resource allocation variables. Note that in the above problem, the constraints $\beta_{t,n} \in [0,1]$ and $\beta_{r,n} \in [0,1]$ are omitted without changing the optimal objective value. This is because for any optimal solution $(\beta_{i,n}^\star, \phi_{i,n}^\star), \forall i \in \{t,r\}$, of problem \eqref{problem:original} with $\beta_{i,n}^\star \in [-1,0]$, it can be verified that $(-\beta_{i,n}^\star, \phi_{i,n}^\star + \pi)$ is also a feasible solution achieving the same objective value.

In problem \eqref{problem:original}, the non-convexity is caused by non-convex STAR-RIS constraints \eqref{costraint:amp} and \eqref{costraint:phase}. To tackle this challenge, we define the auxiliary variables $\tilde{\boldsymbol{\theta}}_{i} = [\tilde{\beta}_{i,1} e^{j \tilde{\phi}_{i,1}}, \dots, \tilde{\beta}_{i,N} e^{j \tilde{\phi}_{i,N}}]^T, \forall i \in \{t,r\},$ such that $\tilde{\boldsymbol{\theta}}_i = \boldsymbol{\theta}_i, \forall i \in \{t,r\}$. 
Then, problem \eqref{problem:original} can be rewritten as:
\begin{subequations} \label{problem:tranformed}
  \begin{align}
      \min_{\mathbf{x} \in \mathcal{X}, \boldsymbol{\theta}_t, \boldsymbol{\theta}_r, \tilde{\boldsymbol{\theta}}_t, \tilde{\boldsymbol{\theta}}_r} & F(\mathbf{x}, \boldsymbol{\theta}_t, \boldsymbol{\theta}_r)   \\[-0.5em]
      \label{constraint:equality}
      \mathrm{s.t.} \quad &\tilde{\boldsymbol{\theta}}_t = \boldsymbol{\theta}_t, \tilde{\boldsymbol{\theta}}_r = \boldsymbol{\theta}_r, \\
      \label{constraint:tranformed_3}
      & \tilde{\beta}_{t,n}^2 + \tilde{\beta}_{r,n}^2 = 1, \forall n \in \mathcal{N}, \\
      \label{constraint:tranformed_4}
      & \cos(\tilde{\phi}_{t,n} -  \tilde{\phi}_{r,n}) = 0, \forall n \in \mathcal{N}.
  \end{align}
\end{subequations}  
In the above problem, there are no constraints imposed on $\boldsymbol{\theta}_t$ and $\boldsymbol{\theta}_r$, except for equality constraint \eqref{constraint:equality}. To handle this equality constraint, we exploit the PDD framework \cite{shi2020penalty}, where the original problem is converted to the corresponding augmented Lagrangian (AL) problem by moving the equality constraints as a penalty term to the objective function. The AL problem corresponding to \eqref{problem:tranformed} is given by
\begin{subequations} \label{problem:penalty}
  \begin{align}
      \min_{ \scriptstyle \mathbf{x} \in \mathcal{X}, \boldsymbol{\theta}_t, \boldsymbol{\theta}_r, \atop \scriptstyle \tilde{\boldsymbol{\theta}}_t, \tilde{\boldsymbol{\theta}}_r} & F(\mathbf{x}, \boldsymbol{\theta}_t, \boldsymbol{\theta}_r) + \frac{1}{2 \rho} \sum_{i \in \{t,r\}} \|\tilde{\boldsymbol{\theta}}_i - \boldsymbol{\theta}_i + \rho \boldsymbol{\lambda}_i \|^2  \\[-0.5em]
      \mathrm{s.t.} \quad & \eqref{constraint:tranformed_3}, \eqref{constraint:tranformed_4},
  \end{align}
\end{subequations} 
where $\rho > 0$ denotes the penalty factor penalizing the violation of constraint \eqref{constraint:equality} and $\boldsymbol{\lambda}_i, \forall i \in \{t,r\},$ denotes the Lagrangian dual variables. As can be observed, when $\rho \rightarrow 0$, then the penalty term will be forced to zero, i.e., equality constraint \eqref{constraint:equality} is enforced. It has been proved that updating the primal and dual variables as well as the penalty factor in an alternating manner, a KKT optimal solution can be obtained via PDD under some mild conditions, such as the \emph{Robinson’s condition} or the \emph{Mangasarian-Fromovitz constraint qualification (MFCQ)} condition \cite{shi2020penalty}. Thus, in the following, we focus on solving problem \eqref{problem:penalty} by invoking the block successive upper-bound minimization (BSUM) or block coordinate descent (BCD) methods\footnote{For BSUM, a locally tight upper bound of the original objective function is optimized in each block. BSUM reduces to BCD when the upper bound is replaced by the original objective function itself \cite{shi2020penalty2}.}, where we divide the optimization variables into two blocks, namely $\{\mathbf{x}, \boldsymbol{\theta}_t, \boldsymbol{\theta}_r\}$ and $\{\tilde{\boldsymbol{\theta}}_t, \tilde{\boldsymbol{\theta}}_r\}$. 

\subsubsection{Subproblem with respect to $\{\mathbf{x}, \boldsymbol{\theta}_t, \boldsymbol{\theta}_r\}$}
Note that introducing the penalty term has no influence on the convexity of the objective function.
Thus, the subproblem with respect to $\{\mathbf{x}, \boldsymbol{\theta}_t, \boldsymbol{\theta}_r\}$ is convex and given by 
\begin{align} \label{problem:primal}
    \min_{\mathbf{x} \in \mathcal{X}, \boldsymbol{\theta}_t, \boldsymbol{\theta}_r} & F(\mathbf{x}, \boldsymbol{\theta}_t, \boldsymbol{\theta}_r) + \frac{1}{2 \rho} \sum_{i \in \{t,r\}} \|\tilde{\boldsymbol{\theta}}_i - \boldsymbol{\theta}_i + \rho \boldsymbol{\lambda}_i \|^2.
\end{align}
As a consequence, the optimal solution to the above problem can be efficiently obtained. 

\subsubsection{Subproblem with respect to $\{\tilde{\boldsymbol{\theta}}_t, \tilde{\boldsymbol{\theta}}_r\}$}
Since variables $\{\tilde{\boldsymbol{\theta}}_t, \tilde{\boldsymbol{\theta}}_r\}$ only appear in the constraints and in the penalty term of the objective function, the resulting problem is given by 
\begin{subequations} \label{problem:auxiliary}
  \begin{align}
      \min_{\tilde{\boldsymbol{\theta}}_t, \tilde{\boldsymbol{\theta}}_r} & \sum_{i \in \{t,r\}} \|\tilde{\boldsymbol{\theta}}_i + \boldsymbol{\vartheta}_i \|^2  \\
      \mathrm{s.t.} \quad &\tilde{\beta}_{t,n}^2 + \tilde{\beta}_{r,n}^2 = 1, \forall n \in \mathcal{N}, \\
      & \cos(\tilde{\phi}_{t,n} -  \tilde{\phi}_{r,n}) = 0, \forall n \in \mathcal{N},
  \end{align}
\end{subequations} 
where $\boldsymbol{\vartheta}_i = -\boldsymbol{\theta}_i + \rho \boldsymbol{\lambda}_i, \forall i \in \{t,r\}$. Although the constraints of the above optimization problem are non-convex, we show that a high-quality solution to this problem can be obtained with low complexity by alternately optimizing the amplitudes and phase shifts. Firstly, the objective function of problem \eqref{problem:auxiliary} can be reformulated as follows:
\begin{align}
  &\sum_{i \in \{t,r\}} \|\tilde{\boldsymbol{\theta}}_i + \boldsymbol{\vartheta}_i \|^2 = \sum_{i \in \{t,r\}} (\tilde{\boldsymbol{\theta}}_i^H \tilde{\boldsymbol{\theta}}_i +  \boldsymbol{\vartheta}_i^H \boldsymbol{\vartheta}_i + 2 \mathrm{Re}( \boldsymbol{\vartheta}_i^H  \tilde{\boldsymbol{\theta}}_i) ) \nonumber \\
  = &\sum_{i \in \{t,r\}} \sum_{n \in \mathcal{N}} \tilde{\beta}_{i,n}^2 + \sum_{i \in \{t,r\}} \boldsymbol{\vartheta}_i^H \boldsymbol{\vartheta}_i + \sum_{i \in \{t,r\}} 2 \mathrm{Re}( \boldsymbol{\vartheta}_i^H  \tilde{\boldsymbol{\theta}}_i) \nonumber \\
  \overset{(a)}{=} & N + \sum_{i \in \{t,r\}} \boldsymbol{\vartheta}_i^H \boldsymbol{\vartheta}_i + \sum_{i \in \{t,r\}} 2 \mathrm{Re}( \boldsymbol{\vartheta}_i^H  \tilde{\boldsymbol{\theta}}_i),
\end{align}
where $(a)$ is due to constraint $\tilde{\beta}_{t,n}^2 + \tilde{\beta}_{r,n}^2 = 1$. In the objective function, only the term $\sum_{i \in \{t,r\}} 2 \mathrm{Re}( \boldsymbol{\vartheta}_i^H  \tilde{\boldsymbol{\theta}}_i)$ involves the optimization variables, while the other terms are constants. Then, we decompose $\tilde{\boldsymbol{\theta}}_i$ to amplitude vector $\tilde{\boldsymbol{\beta}}_i = [\tilde{\beta}_{i,1},\dots,\tilde{\beta}_{i,N}]^T$ and phase-shift vector $\tilde{\boldsymbol{\psi}}_i = [e^{j \tilde{\phi}_{i,1}},\dots,e^{j \tilde{\phi}_{i,N}}]^T$, i.e.,
\begin{equation}
  \tilde{\boldsymbol{\theta}}_i = \mathrm{diag} (\tilde{\boldsymbol{\beta}}_i) \tilde{\boldsymbol{\psi}}_i = \mathrm{diag} (\tilde{\boldsymbol{\psi}}_i) \tilde{\boldsymbol{\beta}}_i, \forall i \in \{t,r\}.
\end{equation}
Consequently, problem \eqref{problem:auxiliary} can be rewritten as follows:
\begin{subequations} \label{problem:closed-form}
  \begin{align}
      &\min_{\tilde{\boldsymbol{\beta}}_t, \tilde{\boldsymbol{\psi}}_t, \tilde{\boldsymbol{\beta}}_r, \tilde{\boldsymbol{\psi}}_r} \sum_{i \in \{t,r\}} \mathrm{Re}(\boldsymbol{\vartheta}_i \mathrm{diag} (\tilde{\boldsymbol{\beta}}_i) \tilde{\boldsymbol{\psi}}_i )  \\
      \mathrm{s.t.} \quad &\tilde{\beta}_{t,n}^2 + \tilde{\beta}_{r,n}^2 = 1, 0 \le \tilde{\beta}_{t,n}, \tilde{\beta}_{r,n} \le 1, \forall n \in \mathcal{N}, \\
      &\cos(\tilde{\phi}_{t,n} -  \tilde{\phi}_{r,n}) = 0, \forall n \in \mathcal{N}.
  \end{align}
\end{subequations} 
Here, we introduce the constraint $0 \le \tilde{\beta}_{t,n}, \tilde{\beta}_{r,n} \le 1$ back to ensure that all the phase shifts are collected in $\tilde{\phi}_{t,n}$ and $\tilde{\phi}_{t,n}$ during the following optimization process, which does not affect the optimal objective value based on the previous analysis. To solve the above optimization problem, we introduce the following two propositions.
\begin{proposition} \label{proposition_1}
  \emph{
  (Closed-form optimal solution for phase shifts for given amplitudes) 
  Define $\tilde{\boldsymbol{\vartheta}}_i = \mathrm{diag}(\tilde{\boldsymbol{\beta}}_i^H)\boldsymbol{\vartheta}_i = [\tilde{\vartheta}_{i,1}, \dots, \tilde{\vartheta}_{i,N}]^T$, $\varphi_n^+ = \tilde{\vartheta}_{t,n}^* + j \tilde{\vartheta}_{r,n}^*$, and $\varphi_n^- = \tilde{\vartheta}_{t,n}^* - j \tilde{\vartheta}_{r,n}^*$. Then, for any given $\tilde{\boldsymbol{\beta}}_t$ and $\tilde{\boldsymbol{\beta}}_r$, the optimal solutions for the elements of $\tilde{\boldsymbol{\psi}}_t$ and $\tilde{\boldsymbol{\psi}}_r$ are given by
  \begin{equation} \label{eqn:optimal_psi}
    (\tilde{\psi}_{t,n}^\star, \tilde{\psi}_{r,n}^\star) = \!\!\!\!\operatorname*{argmin}_{(\tilde{\psi}_{t,n}, \tilde{\psi}_{r,n}) \in \chi_{\psi}^n} \!\!\!\! \mathrm{Re}(\tilde{\vartheta}_{t,n}^* \tilde{\psi}_{t,n} ) + \mathrm{Re}(\tilde{\vartheta}_{r,n}^* \tilde{\psi}_{r,n} ),
  \end{equation} 
  where $\chi_{\psi}^n$ denotes a set of a pair of closed-form solutions:
  \begin{equation}
    \chi_{\psi}^n = \left\{ (e^{j (\pi - \angle \varphi_n^+ )}, e^{j (\frac{3}{2}\pi - \angle \varphi_n^+ )}), (e^{j (\pi - \angle \varphi_n^- )}, e^{j (\frac{1}{2}\pi - \angle \varphi_n^- )}) \right\}.
  \end{equation} 
  }
\end{proposition}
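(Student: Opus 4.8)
The plan is to exploit the fact that, once the amplitude vectors $\tilde{\boldsymbol{\beta}}_t,\tilde{\boldsymbol{\beta}}_r$ are fixed, both the objective of problem \eqref{problem:closed-form} and the coupled phase-shift constraint decouple completely across the elements $n \in \mathcal{N}$, so it suffices to treat each $n$ separately. First I would record that, since $\tilde{\boldsymbol{\beta}}_i$ is real, $\mathrm{diag}(\tilde{\boldsymbol{\beta}}_i^H)$ merely rescales the entries of $\boldsymbol{\vartheta}_i$, so the $n$-th summand of the objective equals $\mathrm{Re}(\tilde{\vartheta}_{t,n}^*\,\tilde{\psi}_{t,n}) + \mathrm{Re}(\tilde{\vartheta}_{r,n}^*\,\tilde{\psi}_{r,n})$ with $\tilde{\boldsymbol{\vartheta}}_i$ as defined in the statement. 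The per-element decoupling is legitimate because the only constraint coupling $t$ and $r$ is the amplitude constraint \eqref{constraint:tranformed_3}, which links $\tilde{\beta}_{t,n}$ and $\tilde{\beta}_{r,n}$ for the same $n$ only, and does not involve the phases at all; hence the problem reduces, for each $n$, to minimizing $g_n(\tilde{\psi}_{t,n},\tilde{\psi}_{r,n}) = \mathrm{Re}(\tilde{\vartheta}_{t,n}^*\,\tilde{\psi}_{t,n}) + \mathrm{Re}(\tilde{\vartheta}_{r,n}^*\,\tilde{\psi}_{r,n})$ over unit-modulus $\tilde{\psi}_{t,n},\tilde{\psi}_{r,n}$ subject to $\cos(\tilde{\phi}_{t,n}-\tilde{\phi}_{r,n})=0$.

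Next I would dispose of the coupled phase-shift constraint by observing, as already noted after \eqref{eqn:coupled}, that $\cos(\tilde{\phi}_{t,n}-\tilde{\phi}_{r,n})=0$ is equivalent to $\tilde{\psi}_{r,n} = j\,\tilde{\psi}_{t,n}$ or $\tilde{\psi}_{r,n} = -j\,\tilde{\psi}_{t,n}$. Substituting $\tilde{\psi}_{r,n} = j\,\tilde{\psi}_{t,n}$ eliminates $\tilde{\psi}_{r,n}$ and turns the $n$-th objective into $\mathrm{Re}\big((\tilde{\vartheta}_{t,n}^* + j\,\tilde{\vartheta}_{r,n}^*)\,\tilde{\psi}_{t,n}\big) = \mathrm{Re}(\varphi_n^+\,\tilde{\psi}_{t,n})$, while $\tilde{\psi}_{r,n} = -j\,\tilde{\psi}_{t,n}$ gives $\mathrm{Re}(\varphi_n^-\,\tilde{\psi}_{t,n})$; in either case we are left with a single unit-modulus variable.

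Then I would solve the elementary one-variable problem $\min_{|\tilde{\psi}|=1}\mathrm{Re}(\varphi\,\tilde{\psi})$ for a fixed complex number $\varphi$: writing $\tilde{\psi}=e^{j\phi}$ yields $\mathrm{Re}(\varphi\,\tilde{\psi}) = |\varphi|\cos(\angle\varphi + \phi)$, which is minimized at $\phi = \pi - \angle\varphi$ with minimum value $-|\varphi|$. Applying this with $\varphi = \varphi_n^+$ gives $\tilde{\psi}_{t,n} = e^{j(\pi - \angle\varphi_n^+)}$ and hence $\tilde{\psi}_{r,n} = j\,\tilde{\psi}_{t,n} = e^{j(\frac32\pi - \angle\varphi_n^+)}$, which is the first element of $\chi_\psi^n$; applying it with $\varphi = \varphi_n^-$ gives $\tilde{\psi}_{t,n} = e^{j(\pi - \angle\varphi_n^-)}$ and $\tilde{\psi}_{r,n} = -j\,\tilde{\psi}_{t,n} = e^{j(\frac12\pi - \angle\varphi_n^-)}$, the second element. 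Since the feasible set of element $n$ is exactly the union of the two cases, the global minimizer of $g_n$ is whichever of these two candidates attains the smaller value, and its value is $\min\{-|\varphi_n^+|,-|\varphi_n^-|\}$; this is precisely the $\operatorname{argmin}$ over the two-point set $\chi_\psi^n$ in \eqref{eqn:optimal_psi}, which establishes the claim.

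I do not expect a genuine obstacle here; the only point requiring a little care is the degenerate case $\varphi_n^+ = 0$ or $\varphi_n^- = 0$, in which $\angle\varphi_n^\pm$ is undefined but immaterial, since the corresponding objective contribution is $0$ for every feasible $\tilde{\psi}_{t,n}$, so any representative (e.g.\ the one produced by the formula under the convention $\angle 0 := 0$) is optimal. A secondary remark worth making explicit is that the constraints $0 \le \tilde{\beta}_{t,n},\tilde{\beta}_{r,n} \le 1$ reintroduced in \eqref{problem:closed-form} play no role in this subproblem, as the amplitudes are held fixed throughout.
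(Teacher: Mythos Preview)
Your proposal is correct and follows essentially the same route as the paper's proof: decouple across $n$, rewrite the coupled phase-shift constraint as $\tilde{\psi}_{r,n}=\pm j\,\tilde{\psi}_{t,n}$, reduce to the one-variable problem $\min_{|\tilde{\psi}|=1}\mathrm{Re}(\varphi\,\tilde{\psi})$, solve it in closed form, and compare the two candidates. Your added remarks on the degenerate case $\varphi_n^\pm=0$ and on the irrelevance of the amplitude box constraints are sound extras the paper omits.
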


\begin{proof}
  Please refer to Appendix A.
\end{proof}

\begin{proposition} \label{proposition_2}
  \emph{
  (Closed-form optimal solution for amplitudes for given phase shifts) 
  Define $\breve{\boldsymbol{\vartheta}}_i = \mathrm{diag}(\tilde{\boldsymbol{\psi}}_i^H)\boldsymbol{\vartheta}_i = [\breve{\vartheta}_{i,1}, \dots, \breve{\vartheta}_{i,N}]^T$,  
  $a_n = |\breve{\vartheta}_t^*| \cos(\angle \breve{\vartheta}_t^* )$, $b_n = |\breve{\vartheta}_r^*| \sin(\angle \breve{\vartheta}_r^* )$, $\xi _n = \mathrm{sgn}(b_n) \arccos(\frac{a_n}{\sqrt{a_n^2 + b_n^2}})$. Then, for any given $\tilde{\boldsymbol{\psi}}_t$ and $\tilde{\boldsymbol{\psi}}_r$, the optimal solutions for the elements of $\tilde{\boldsymbol{\beta}}_t$ and $\tilde{\boldsymbol{\beta}}_r$ are given by 
  \begin{equation} \label{eqn:optimal_beta}
    \tilde{\beta}_{t,n}^\star = \sin \omega_n, \quad \tilde{\beta}_{r,n}^\star = \cos \omega_n,
  \end{equation}
  where 
  \begin{equation} \label{eqn:optimal_omega_0}
    \omega_n = \begin{cases}
        -\frac{1}{2}\pi - \xi_n, &\text{if } \xi_n \in [-\pi, -\frac{1}{2}\pi),\\[-0.5em]
        0, & \text{if } \xi_n \in [-\frac{1}{2}\pi, \frac{1}{4}\pi), \\[-0.5em]
        \frac{1}{2}\pi, &\text{otherwise}.
    \end{cases}
  \end{equation}  
  }
\end{proposition}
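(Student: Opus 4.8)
The plan is to exploit that both the objective and the constraints of problem~\eqref{problem:closed-form} are separable across the element index $n$, so that it suffices to solve, for each fixed $n$, the scalar problem over $(\tilde{\beta}_{t,n},\tilde{\beta}_{r,n})$ subject to $\tilde{\beta}_{t,n}^{2}+\tilde{\beta}_{r,n}^{2}=1$ and $0\le\tilde{\beta}_{t,n},\tilde{\beta}_{r,n}\le1$. First I would substitute the given phases into the $n$-th summand of the objective; because $\tilde{\beta}_{t,n}$ and $\tilde{\beta}_{r,n}$ are real, it is already an affine function of the two amplitudes, with the $\tilde{\beta}_{t,n}$-coefficient equal to $\mathrm{Re}(\breve{\vartheta}_{t,n}^{*})=a_n$. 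To obtain the stated $\tilde{\beta}_{r,n}$-coefficient I would invoke the coupled phase constraint $\cos(\tilde{\phi}_{t,n}-\tilde{\phi}_{r,n})=0$, which lets me replace $e^{j\tilde{\phi}_{r,n}}$ by $\pm j\,e^{j\tilde{\phi}_{t,n}}$; the factor $\pm j$ then converts the reflection term into (plus or minus) the imaginary part of a quantity built from the transmit phase, which is precisely the quantity $b_n$ in the statement. After this step the per-element objective is $a_n\tilde{\beta}_{t,n}+b_n\tilde{\beta}_{r,n}$.

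Next I would deal with the remaining constraints by the angular change of variables $\tilde{\beta}_{t,n}=\sin\omega_n$, $\tilde{\beta}_{r,n}=\cos\omega_n$ with $\omega_n\in[0,\tfrac{\pi}{2}]$; this is a bijection onto the feasible quarter-circle arc, and the upper bounds $\tilde{\beta}_{t,n},\tilde{\beta}_{r,n}\le1$ are then automatically inactive. Writing $a_n\sin\omega_n+b_n\cos\omega_n=\sqrt{a_n^{2}+b_n^{2}}\,\sin(\omega_n+\xi_n)$ with $\xi_n=\angle(a_n+j b_n)=\mathrm{sgn}(b_n)\arccos\!\big(a_n/\sqrt{a_n^{2}+b_n^{2}}\big)$, the task reduces to minimising $\sin(\omega_n+\xi_n)$ over $\omega_n\in[0,\tfrac{\pi}{2}]$.

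The final step is a short one-dimensional case analysis. The global minimisers of $\sin(\cdot)$ occur at $\omega_n+\xi_n=-\tfrac{\pi}{2}$, i.e.\ at $\omega_n=-\tfrac{\pi}{2}-\xi_n$; this point lies in $[0,\tfrac{\pi}{2}]$ exactly when $\xi_n\in[-\pi,-\tfrac{\pi}{2})$, which yields the first branch of~\eqref{eqn:optimal_omega_0}. When $\xi_n\notin[-\pi,-\tfrac{\pi}{2})$, differentiating gives $\tfrac{d}{d\omega_n}\sin(\omega_n+\xi_n)=\cos(\omega_n+\xi_n)$, and one checks that on $(0,\tfrac{\pi}{2})$ the only possible interior critical point is a maximum; hence the minimum is attained at an endpoint. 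Comparing $\sin\xi_n$ (the value at $\omega_n=0$) with $\cos\xi_n$ (the value at $\omega_n=\tfrac{\pi}{2}$), whose ordering switches at $\xi_n=\tfrac{\pi}{4}$, gives $\omega_n=0$ for $\xi_n\in[-\tfrac{\pi}{2},\tfrac{\pi}{4})$ and $\omega_n=\tfrac{\pi}{2}$ otherwise. Reading off $\tilde{\beta}_{t,n}^{\star}=\sin\omega_n$ and $\tilde{\beta}_{r,n}^{\star}=\cos\omega_n$ completes the proof.

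I expect the main obstacle to be the reduction in the first paragraph: one has to keep careful track of the complex conjugations and of the sign of the $\pm\tfrac{\pi}{2}$ branch of the coupled phase constraint so that the reduced objective comes out with coefficients exactly $a_n$ and $b_n$ rather than with real and imaginary parts, or signs, interchanged. The one-dimensional minimisation in the second and third paragraphs is routine once the correct $\xi_n$ is in hand, the only care points being the half-open interval boundaries in~\eqref{eqn:optimal_omega_0} and the degenerate cases $b_n=0$ (where the convention for $\mathrm{sgn}(0)$ matters) and $a_n=b_n=0$ (where the objective does not depend on $\omega_n$).
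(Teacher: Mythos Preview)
Your plan matches the paper's proof almost step for step: per-element decoupling, reduction to the linear objective $a_n\tilde\beta_{t,n}+b_n\tilde\beta_{r,n}$, the polar substitution $\tilde\beta_{t,n}=\sin\omega_n$, $\tilde\beta_{r,n}=\cos\omega_n$, the rewriting as $\sqrt{a_n^2+b_n^2}\,\sin(\omega_n+\xi_n)$, and the case analysis on $[0,\tfrac{\pi}{2}]$. The paper omits the endpoint comparison you spell out, so your third paragraph is in fact more detailed than Appendix~B.

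The one place you diverge is the derivation of the $\tilde\beta_{r,n}$-coefficient, and there your plan has a gap. You propose to invoke the coupled phase-shift constraint to turn the reflection term into an imaginary part ``built from the transmit phase''; but that substitution yields a quantity involving $\vartheta_{r,n}$ and $\tilde\phi_{t,n}$, which is \emph{not} $\breve{\vartheta}_{r,n}=e^{-j\tilde\phi_{r,n}}\vartheta_{r,n}$, so it cannot equal $\mathrm{Im}(\breve{\vartheta}_{r,n}^{*})$. In fact no such manoeuvre is needed: since $\tilde\beta_{r,n}$ is real, the $n$-th reflection summand is simply $\mathrm{Re}(\breve{\vartheta}_{r,n}^{*})\,\tilde\beta_{r,n}$, and the paper's own proof accordingly takes $b_n=|\breve{\vartheta}_{r,n}^{*}|\cos(\angle\breve{\vartheta}_{r,n}^{*})$, exactly parallel to $a_n$. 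The ``$\sin$'' in the proposition statement is a misprint relative to Appendix~B; once you read it as ``$\cos$'', the obstacle you anticipated disappears and the coupled-phase constraint plays no role in this subproblem.
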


\begin{proof}
  Please refer to Appendix B.
\end{proof}

\begin{algorithm}[tb]
  \caption{BSUM/BCD algorithm for solving problem \eqref{problem:penalty}.}
  \label{alg:closed_form_AO}
  \begin{algorithmic}[1]
      \State{Initialize the optimization variables}
      \Repeat
      \State{update $\{\mathbf{x}, \boldsymbol{\theta}_t, \boldsymbol{\theta}_r\}$ by solving problem \eqref{problem:primal}}
      \State{update $\{\tilde{\boldsymbol{\psi}}_t, \tilde{\boldsymbol{\psi}}_r\}$ by \eqref{eqn:optimal_psi}}
      \State{update $\{\tilde{\boldsymbol{\beta}}_t, \tilde{\boldsymbol{\beta}}_r\}$ by \eqref{eqn:optimal_beta}}
      \Until{convergence}
  \end{algorithmic}
\end{algorithm}

According to \textbf{Propositions \ref{proposition_1}} and \textbf{\ref{proposition_2}}, we can further divide block $\{\tilde{\boldsymbol{\theta}}_t, \tilde{\boldsymbol{\theta}}_r\}$ into two sub-blocks, namely $\{\tilde{\boldsymbol{\psi}}_t, \tilde{\boldsymbol{\psi}}_r\}$ and $\{\tilde{\boldsymbol{\beta}}_t, \tilde{\boldsymbol{\beta}}_r\}$. Then, the overall BSUM/BCD algorithm to solve problem \eqref{problem:penalty} is summarized in \textbf{Algorithm \ref{alg:closed_form_AO}}. Since the optimal solution is obtained in each step, the convergence of \textbf{Algorithm \ref{alg:closed_form_AO}} is guaranteed \cite{razaviyayn2013unified}. The complexities of updating $\{\tilde{\boldsymbol{\psi}}_t, \tilde{\boldsymbol{\psi}}_r\}$ and $\{\tilde{\boldsymbol{\beta}}_t, \tilde{\boldsymbol{\beta}}_r\}$ are $\mathcal{O}(4N)$ and $\mathcal{O}(2N)$, respectively, where $\mathcal{O}(\cdot)$ is the big-O notation. Moreover, the complexity of updating $\{\mathbf{x}, \boldsymbol{\theta}_t, \boldsymbol{\theta}_r\}$ is determined by the exact form of problem \eqref{problem:primal} and the methods used to solve it.

\section{Case Study and Numerical Results}
To verify the effectiveness of the proposed general optimization framework, in this section, we use a case study, where we maximize the throughput in a narrowband STAR-RIS-aided communication system.

\subsection{System Model and Problem Formulation}
Consider an $M$-antenna base station (BS), an $N$-element STAR-RIS, and $K$ single-antenna users, whose indices are collected in $\mathcal{K}$. Without loss of generality, we assume that the users in subset $\mathcal{K}_t = \{1,\dots,K_0\}$ are located on the transmission side, and the users in subset $\mathcal{K}_r = \{K_0+1,\dots,K\}$ are located on the reflection side. The direct BS-user channels are assumed to be blocked. Thus, the received signal at user $k, \forall k \in \mathcal{K}_i, \forall i \in \{t,r\},$ is given by  
\begin{equation}
  y_k =  \mathbf{h}_k^H \mathbf{\Theta}_i \mathbf{G} \sum_{\ell \in \mathcal{K}} \mathbf{w}_\ell s_\ell + n_k,
\end{equation}
where $\mathbf{h}_k \in \mathbb{C}^{N \times 1}$ denotes the STAR-RIS-user-$k$ channel, $\mathbf{G} \in \mathbb{C}^{M \times N}$ denotes the BS-STAR-RIS channel, $\mathbf{\Theta}_i = \mathrm{diag}(\boldsymbol{\theta}_i)$ denotes the transmission/reflection coefficients of the STAR-RIS, $\mathbf{w}_\ell \in \mathbb{C}^{M \times 1}$ denotes the active beamforming vector at the BS for delivering information symbol $s_\ell \in \mathbb{C}$ to user $\ell$, and $n_k \sim \mathcal{CN}(0, \sigma_k^2)$ denotes complex Gaussian noise with power $\sigma_k^2$. Then, the signal-to-interference-plus-noise ratio (SINR) for decoding $s_k$ at user $k, \forall k \in \mathcal{K}_i, \forall i \in \{t,r\},$ is given by 
\begin{align}
  \gamma_k = &\frac{|\mathbf{h}_k^H \mathbf{\Theta}_i \mathbf{G} \mathbf{w}_k|^2}{\sum_{\ell \in \mathcal{K} \setminus k} |\mathbf{h}_k^H \mathbf{\Theta}_i \mathbf{G} \mathbf{w}_\ell|^2 + \sigma_k^2}.
\end{align}
The corresponding achievable rate is $R_k = \log_2(1 + \gamma_k)$. 

We aim to maximize the throughput of the considered STAR-RIS-aided system subject to a transmit power constraint and coupled STAR-RIS phase-shift and amplitude constraints. The corresponding optimization problem can be formulated as follows:
\begin{subequations} \label{problem:throughput}
  \begin{align}
      \max_{\mathbf{W}, \boldsymbol{\theta}_t, \boldsymbol{\theta}_r} \quad & \sum_{k \in \mathcal{K}} R_k \\
      \label{constraint:throught_1}
      \mathrm{s.t.} \quad & \mathrm{tr}(\mathbf{W} \mathbf{W}^H) \le P_t, \\
      \label{constraint:throught_2}
      & \beta_{t,n}^2 + \beta_{r,n}^2 = 1, \forall n \in \mathcal{N}, \\
      \label{constraint:throught_3}
      & \cos(\phi_{t,n} -  \phi_{r,n}) = 0, \forall n \in \mathcal{N},
  \end{align}
\end{subequations}
where $\mathbf{W} = [\mathbf{w}_1, \dots, \mathbf{w}_K]$ and $P_t$ denotes the BS transmit power. We note that existing methods for solving throughput-maximization problems cannot be directly applied to problem \eqref{problem:throughput} due to the coupled STAR-RIS phase-shift and amplitude constraints. In the following section, we show that the proposed general optimization framework can be used to effectively solve problem \eqref{problem:throughput}.

\subsection{Solution to Problem \eqref{problem:throughput} using the Proposed Framework}
Note that the objective function of problem \eqref{problem:throughput} is non-convex with respect to $\{\mathbf{W}, \boldsymbol{\theta}_t, \boldsymbol{\theta}_r\}$. In order to employ the proposed optimization framework, we first transform the throughput maximization problem into an equivalent weighted mean square error (MSE) minimization problem applying the well-known weighted minimum mean square error (WMMSE) method \cite{christensen2008weighted} as follows:
\begin{subequations} \label{problem:WMMSE_0}
  \begin{align}
      \max_{\boldsymbol{\varpi }, \boldsymbol{\upsilon}, \mathbf{W}, \boldsymbol{\theta}_t, \boldsymbol{\theta}_r}  & \sum_{k \in \mathcal{K}} \varpi_k e_k \\
      \mathrm{s.t.} \quad & \eqref{constraint:throught_1} - \eqref{constraint:throught_3}.
  \end{align}
\end{subequations}
Here, $\boldsymbol{\varpi } = [\varpi _1,\dots,\varpi _K]^T$ denotes the vector of weights, and $e_k, \forall k \in \mathcal{K}_i, \forall i \in \{t,r\},$ denotes the MSE as follows: 
\begin{align}
  e_k = &|\upsilon_k|^2 \big( \sum_{\ell \in \mathcal{K}} |\boldsymbol{\theta}_i^T \mathrm{diag}(\mathbf{h}_k^H) \mathbf{G} \mathbf{w}_\ell|^2 + \sigma_k^2\big) \nonumber \\[-0.5em]
  &- 2 \mathrm{Re} \{ \upsilon_k^* \boldsymbol{\theta}_i^T \mathrm{diag}(\mathbf{h}_k^H) \mathbf{G} \mathbf{w}_k \} + 1,
\end{align} 
where $\boldsymbol{\upsilon} = [\upsilon_1,\dots,\upsilon_K]^T$ are auxiliary variables. According to \cite{christensen2008weighted}, it can be proved that if $\{\boldsymbol{\varpi }, \boldsymbol{\upsilon}, \mathbf{W}, \boldsymbol{\theta}_t, \boldsymbol{\theta}_r\}$ is a KKT optimal solution to problem \eqref{problem:WMMSE_0}, $\{\mathbf{W}, \boldsymbol{\theta}_t, \boldsymbol{\theta}_r\}$ is also a KKT optimal solution to problem \eqref{problem:throughput}. In problem \eqref{problem:WMMSE_0}, the objective function is block-wise convex with respect to $\{\boldsymbol{\varpi }, \boldsymbol{\upsilon}\}$, $\mathbf{W}$, and $\{\boldsymbol{\theta}_t, \boldsymbol{\theta}_r\}$. Moreover, the feasible sets of $\{\boldsymbol{\varpi }, \boldsymbol{\upsilon}\}$ and $\mathbf{W}$ are also convex. Thus, we have transformed the throughput maximization problem into the canonical form of problem \eqref{problem:original}, where the corresponding optimization variables, feasible set, and objective function are given by $\mathbf{x} = \{\boldsymbol{\varpi }, \boldsymbol{\upsilon}, \mathbf{W}\}$, $\mathcal{X} = \left\{(\boldsymbol{\varpi }, \boldsymbol{\upsilon}, \mathbf{W}) | \mathrm{tr}(\mathbf{W}\mathbf{W}^H) \le P_t \right\}$, and $F(\mathbf{x}, \boldsymbol{\theta}_t, \boldsymbol{\theta}_r) = \sum_{k \in \mathcal{K}} \varpi_k e_k$, respectively. Therefore, we can employ the proposed framework to solve problem \eqref{problem:throughput}. 

By defining $\tilde{\boldsymbol{\theta}}_t = \boldsymbol{\theta}_t$ and $\tilde{\boldsymbol{\theta}}_r = \boldsymbol{\theta}_r$, problem \eqref{problem:WMMSE_0} can be equivalently reformulated as follows: 
\begin{subequations} \label{problem:WMMSE}
  \begin{align}
      \min_{\scriptstyle \boldsymbol{\varpi }, \boldsymbol{\upsilon}, \mathbf{W}, \boldsymbol{\theta}_t, \boldsymbol{\theta}_r \atop  \scriptstyle \tilde{\boldsymbol{\theta}}_t, \tilde{\boldsymbol{\theta}}_r } & \sum_{k \in \mathcal{K}} \varpi_k e_k \\[-0.5em]
      \label{constraint:WMMSE_1}
      \mathrm{s.t.} \quad &\tilde{\boldsymbol{\theta}}_t = \boldsymbol{\theta}_t, \tilde{\boldsymbol{\theta}}_r = \boldsymbol{\theta}_r \\
      \label{constraint:WMMSE_2}
      &\mathrm{tr}(\mathbf{W} \mathbf{W}^H) \le P_t, \\
      \label{constraint:WMMSE_3}
      & \tilde{\beta}_{t,n}^2 + \tilde{\beta}_{r,n}^2 = 1, \forall n \in \mathcal{N}, \\
      \label{constraint:WMMSE_4}
      & \cos(\tilde{\phi}_{t,n} -  \tilde{\phi}_{r,n}) = 0, \forall n \in \mathcal{N}.
  \end{align}
\end{subequations}
By moving equality constraint \eqref{constraint:WMMSE_1} via a penalty term to the objective function, the following problem is obtained:
\begin{subequations} \label{problem:final}
  \begin{align}
      \min_{\scriptstyle \boldsymbol{\varpi }, \boldsymbol{\upsilon}, \mathbf{W}, \boldsymbol{\theta}_t, \boldsymbol{\theta}_r \atop \scriptstyle \tilde{\boldsymbol{\theta}}_t, \tilde{\boldsymbol{\theta}}_r} & \sum_{k \in \mathcal{K}} \varpi _k e_k + \frac{1}{2 \rho} \sum_{i \in \{t,r\}} \|\tilde{\boldsymbol{\theta}}_i - \boldsymbol{\theta}_i + \rho \boldsymbol{\lambda}_i \|^2 \\
      \mathrm{s.t.} \quad & \eqref{constraint:WMMSE_2} - \eqref{constraint:WMMSE_4}.
  \end{align}
\end{subequations}
Then, the above problem can be solved via BCD by alternately optimizing the blocks $\{\boldsymbol{\varpi }, \boldsymbol{\upsilon}\}$, $\mathbf{W}$, $\{\boldsymbol{\theta}_t, \boldsymbol{\theta}_r\}$, $\{\tilde{\boldsymbol{\psi}}_t, \tilde{\boldsymbol{\psi}}_r\}$, and $\{\tilde{\boldsymbol{\beta}}_t, \tilde{\boldsymbol{\beta}}_r\}$.

\subsubsection{Subproblem with respect to $\{\boldsymbol{\varpi }, \boldsymbol{\upsilon}\}$}
The optimal $\varpi _k$ and $\upsilon_k, \forall k \in \mathcal{K}_i, \forall i \in \{t,r\},$ of this subproblem are given by \cite{christensen2008weighted}
\begin{align} 
  \label{eqn:optimal_omega}
  &\varpi _k = 1 + \gamma_k, \\
  \label{eqn:optimal_upsilon}
  &\upsilon_k = \frac{ \boldsymbol{\theta}_i^T \mathrm{diag}(\mathbf{h}_k^H) \mathbf{G} \mathbf{w}_k }{\sum_{\ell \in \mathcal{K}} |\boldsymbol{\theta}_i^T \mathrm{diag}(\mathbf{h}_k^H) \mathbf{G} \mathbf{w}_\ell|^2 + \sigma_k^2}.
\end{align}

\subsubsection{Subproblems with respect to $\mathbf{W}$ and $\{\boldsymbol{\theta}_t, \boldsymbol{\theta}_r\}$}
Note that the objective function of \eqref{problem:WMMSE} is convex with respect to $\mathbf{W}$ and $\{\boldsymbol{\theta}_t, \boldsymbol{\theta}_r\}$, respectively. Thus, the corresponding optimal solution can be efficiently obtained using existing optimization toolboxes, such as CVX \cite{cvx}. 

\subsubsection{Subproblems with respect to $\{\tilde{\boldsymbol{\psi}}_t, \tilde{\boldsymbol{\psi}}_r\}$ and $\{\tilde{\boldsymbol{\beta}}_t, \tilde{\boldsymbol{\beta}}_r\}$}
The optimal $\{\tilde{\boldsymbol{\psi}}_t, \tilde{\boldsymbol{\psi}}_r\}$ and $\{\tilde{\boldsymbol{\beta}}_t, \tilde{\boldsymbol{\beta}}_r\}$ can directly obtained based on \textbf{Propositions \ref{proposition_1}} and \textbf{\ref{proposition_2}}.

Finally, the dual variables $\{\boldsymbol{\lambda}_t, \boldsymbol{\lambda}_r\}$ and penalty factor $\rho$ can be updated following the PDD framework. As a consequence, the overall algorithm for solving problem \eqref{problem:WMMSE} is summarized in \textbf{Algorithm \ref{alg:throughput}}, where $\delta = \max \{\| \tilde{\boldsymbol{\theta}}_t - \boldsymbol{\theta}_t \|_{\infty}, \| \tilde{\boldsymbol{\theta}}_r - \boldsymbol{\theta}_r \|_{\infty} \}$ denotes the constraint violation function. If the MFCQ condition is satisfied, the PDD framework can obtain a KKT optimal solution to problem \eqref{problem:WMMSE}, c.f. \cite{shi2020penalty}. Thus, we show that the MFCQ condition indeed holds for problem \eqref{problem:WMMSE}.

\begin{algorithm}[tb]
  \caption{PDD-based algorithm for solving problem \eqref{problem:WMMSE}.}
  \label{alg:throughput}
  \begin{algorithmic}[1]
      \State{Initialize the optimization variables, and $0<c<1$}
      \Repeat
      \Repeat
      \State{update $\{\boldsymbol{\varpi }, \boldsymbol{\upsilon}\}$ by \eqref{eqn:optimal_omega} and \eqref{eqn:optimal_upsilon}}
      \State{update $\mathbf{W}$ by solving \eqref{problem:final} for $\mathbf{W}$}
      \State{update $\{\boldsymbol{\theta}_t, \boldsymbol{\theta}_r\}$ by solving \eqref{problem:final} for $\{\boldsymbol{\theta}_t, \boldsymbol{\theta}_r\}$}
      \State{update $\{\tilde{\boldsymbol{\psi}}_t, \tilde{\boldsymbol{\psi}}_r\}$ by \eqref{eqn:optimal_psi}}
      \State{update $\{\tilde{\boldsymbol{\beta}}_t, \tilde{\boldsymbol{\beta}}_r\}$ by \eqref{eqn:optimal_beta}}
      \Until{convergence}
      \If {$\delta \le \eta$} set $\boldsymbol{\lambda}_i = \boldsymbol{\lambda}_i + \frac{1}{\rho} ( \tilde{\boldsymbol{\theta}}_i - \boldsymbol{\theta}_i ), \forall i \in \{t,r\} $
      \Else {} set $\rho = c \rho$
      \EndIf
      \State{set $\eta = 0.9 \delta$}
      \Until{$\delta$ falls below a predefined threshold}
  \end{algorithmic}
\end{algorithm}

\begin{proposition} \label{proposition_3}
  \emph{MFCQ holds for problem \eqref{problem:WMMSE} at any feasible point $\{\mathbf{W}, \boldsymbol{\theta}_t, \boldsymbol{\theta}_r\}$ with $\mathbf{W} \neq \mathbf{0}$.}
\end{proposition}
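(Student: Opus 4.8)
The plan is to verify the two defining conditions of the Mangasarian–Fromovitz constraint qualification directly from the constraint Jacobian of problem \eqref{problem:WMMSE}: (i) the gradients of all equality constraints are linearly independent at the given feasible point, and (ii) there is a single direction that lies in the common null space of those gradients and along which every active inequality constraint strictly decreases. Since MFCQ depends only on the constraints, the (non-convex) objective $\sum_{k}\varpi_k e_k$ is irrelevant, and complex quantities are handled by separating real and imaginary parts (equivalently, through Wirtinger derivatives).

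First I would enumerate the constraints element by element. For each $n\in\mathcal{N}$, the consensus constraint \eqref{constraint:WMMSE_1} contributes the four real equalities $\mathrm{Re}(\tilde{\theta}_{i,n}-\theta_{i,n})=0$, $\mathrm{Im}(\tilde{\theta}_{i,n}-\theta_{i,n})=0$ for $i\in\{t,r\}$; the power-splitting equality \eqref{constraint:WMMSE_3} contributes $\tilde{\beta}_{t,n}^2+\tilde{\beta}_{r,n}^2-1=0$; and the coupled phase-shift equality \eqref{constraint:WMMSE_4} contributes $\cos(\tilde{\phi}_{t,n}-\tilde{\phi}_{r,n})=0$. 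The only inequality is the global power budget \eqref{constraint:WMMSE_2}. The structural feature I would exploit is that each variable $\theta_{i,n}$ enters exactly one equality — its own copy in \eqref{constraint:WMMSE_1} — and with coefficient $-1$; that the gradient of \eqref{constraint:WMMSE_3} is supported only on $\{\tilde{\beta}_{t,n},\tilde{\beta}_{r,n}\}$; and that the gradient of \eqref{constraint:WMMSE_4} is supported only on $\{\tilde{\phi}_{t,n},\tilde{\phi}_{r,n}\}$, all of these supports being pairwise disjoint over $n$.

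Linear independence then follows in two moves: equating a generic vanishing linear combination of all equality gradients to zero and reading off the $\theta_{i,n}$-coordinates forces every multiplier attached to \eqref{constraint:WMMSE_1} to vanish, after which the residual combination of the gradients of \eqref{constraint:WMMSE_3} and \eqref{constraint:WMMSE_4} must vanish coordinate by coordinate because those gradients are nonzero at any feasible point — $(2\tilde{\beta}_{t,n},2\tilde{\beta}_{r,n})\neq\mathbf{0}$ since $\tilde{\beta}_{t,n}^2+\tilde{\beta}_{r,n}^2=1$, and the phase gradient has entries $\mp1,\pm1$ since $\cos(\tilde{\phi}_{t,n}-\tilde{\phi}_{r,n})=0$ forces $\sin(\tilde{\phi}_{t,n}-\tilde{\phi}_{r,n})=\pm1$ — and have disjoint supports. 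For condition (ii), this is exactly where $\mathbf{W}\neq\mathbf{0}$ is used: I would take the direction equal to $-\mathbf{W}$ in the $\mathbf{W}$-block and zero in every other block; it lies in the null space of all equality-constraint gradients because no equality constraint depends on $\mathbf{W}$, and it makes the directional derivative of $\mathrm{tr}(\mathbf{W}\mathbf{W}^H)-P_t$ equal to $-2\|\mathbf{W}\|_F^2<0$, so the lone inequality strictly decreases whenever it is active.

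I do not anticipate a deep obstacle; the bulk of the effort is laying out the equality Jacobian so that its block/triangular structure becomes visible, together with minor care about the choice of real coordinates and the boundary cases $\tilde{\beta}_{t,n}=0$ or $\tilde{\beta}_{r,n}=0$. If one instead phrases \eqref{constraint:WMMSE_4} through the complex iterates as $\mathrm{Re}(\tilde{\theta}_{t,n}\tilde{\theta}_{r,n}^{*})=0$, one extra line is needed to note that feasibility forbids $\tilde{\theta}_{t,n}=\pm\tilde{\theta}_{r,n}$ (which would give $\cos(\tilde{\phi}_{t,n}-\tilde{\phi}_{r,n})=\pm1$), keeping that gradient independent of the amplitude-constraint gradient. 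Once the disjoint-support structure is in place, both the independence argument and the construction of the MFCQ direction are routine.
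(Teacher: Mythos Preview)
Your proposal is correct and follows essentially the same route as the paper's proof in Appendix~C: both arguments peel off the consensus constraints \eqref{constraint:WMMSE_1} first by observing that the variables $\boldsymbol{\theta}_t,\boldsymbol{\theta}_r$ appear only there, then use the disjoint $(\tilde{\boldsymbol{\beta}},\tilde{\boldsymbol{\phi}})$ supports of \eqref{constraint:WMMSE_3} and \eqref{constraint:WMMSE_4} to finish the linear-independence part, and finally take the direction $-\mathbf{W}$ (zero elsewhere) to produce strict decrease in the power constraint. If anything, your write-up is slightly more careful than the paper's, since you explicitly check that the amplitude and phase gradients are nonzero at feasible points and flag the $\tilde{\beta}_{i,n}=0$ boundary case.
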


\begin{proof}
  Please refer to Appendix C.
\end{proof}
According to \textbf{Proposition \ref{proposition_3}}, we can conclude that a KKT optimal solution to problem \eqref{problem:WMMSE} can be obtained with \textbf{Algorithm \ref{alg:throughput}}, which is also a KKT optimal solution to the original problem \eqref{problem:throughput}.
\begin{remark}
  \emph{
    This case study reveals that once the original problem is transformed into a form for which 1) the objective function is convex or block-wise convex, and 2) the feasible set is convex or block-wise convex except for constraints \eqref{costraint:amp} and \eqref{costraint:phase}, the proposed optimization framework can be directly used. Typically, such a transformation can be achieved by existing methods such as WMMSE, majorization-minimization, and successive convex optimization (SCA). In most cases, if the transformed problem satisfies the mild MFCQ condition, the KKT optimal solution can be obtained when PDD is employed to update the dual variables and the penalty factor in the proposed framework \cite{shi2020penalty}. Otherwise, at least the convergence of the proposed framework can be guaranteed.
  }
\end{remark}

\subsection{Numerical Results}
In this section, simulation results are provided to verify the effectiveness of the proposed optimization framework. Here, we assume that the BS with $M=8$ antennas is $50$ meters, under an angle of $20^\circ$, away from the STAR-RIS. The users are located on half-circles centered at the STAR-RIS with a radius of $3$ meters. We also assume that half of the users are located on the transmission side and the rest are located on the reflection side. The channels are modeled as Rician fading channels with a Rician factor of $3$ dB and a path loss exponent of $2.2$. The path loss at the reference distance of $1$ meter is set to $30$ dB. The transmit power of the BS and the noise power at the users are set to $20$ dBm and $-110$ dBm, respectively. 

\begin{figure}[t!]
  \centering
  \begin{subfigure}[t]{0.4\textwidth}
    \centering
    \includegraphics[width=1\textwidth]{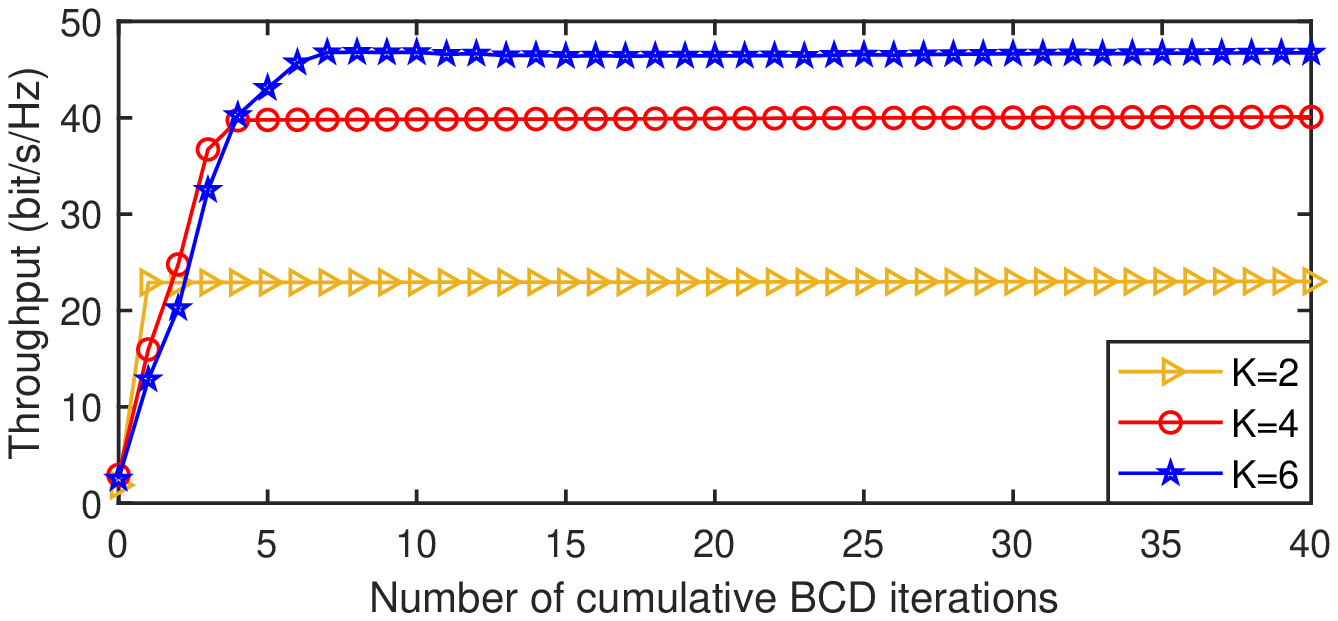}
    \subcaption{Throughput for different values of $K$. \label{fig:convergence_throughput}}
  \end{subfigure}
  \begin{subfigure}[t]{0.4\textwidth}
    \centering
    \includegraphics[width=1\textwidth]{./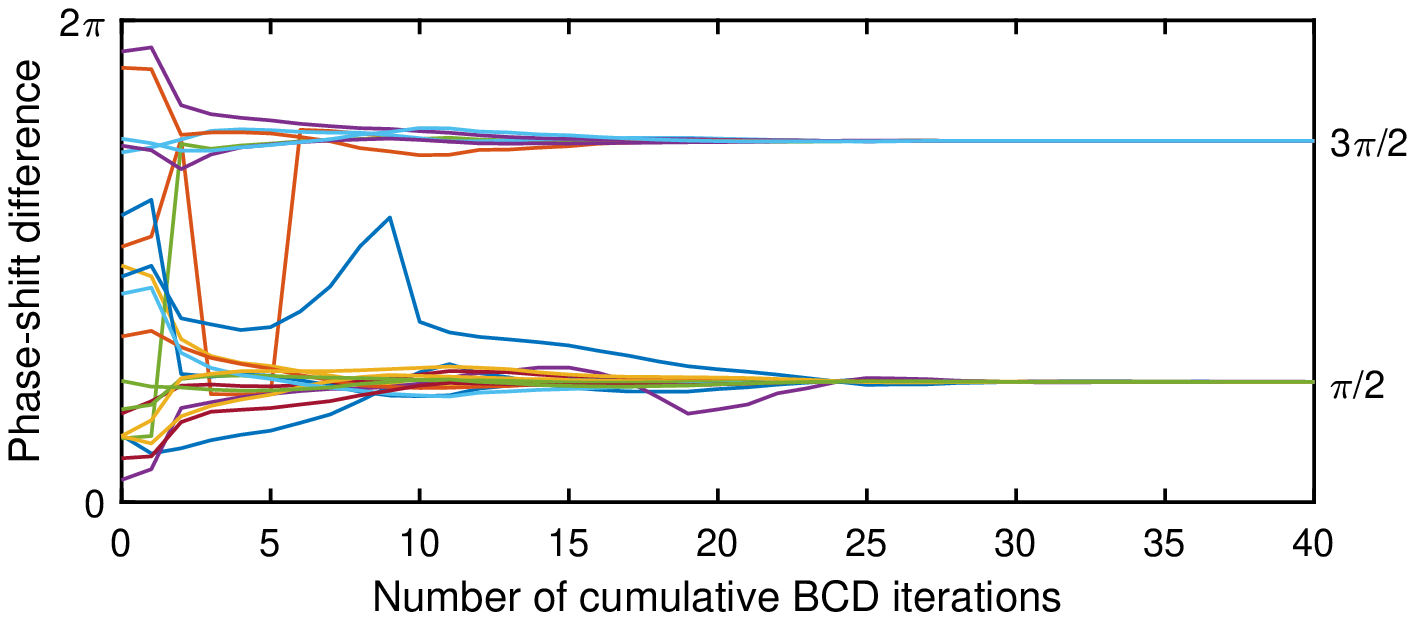}
    \subcaption{Phase-shift difference for $K=6$. \label{fig:convergence_phase_shift}}
  \end{subfigure}
  \caption{The convergence behavior of Algorithm \ref{alg:throughput}.}
  \label{fig:convergence}
\end{figure}

\begin{figure}[t!] 
  \centering
  \includegraphics[width=0.4\textwidth]{./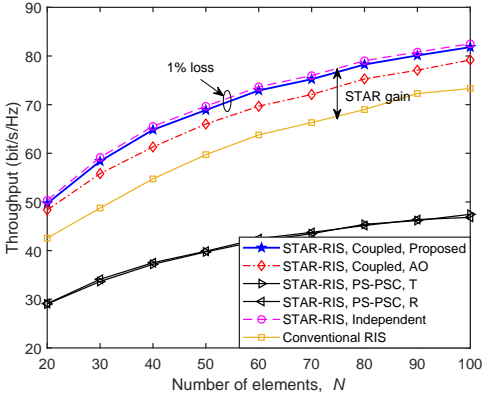}
  \caption{Throughput versus number of elements $N$.}
  \label{fig:rate_vs_elements}
\end{figure}

In Fig. \ref{fig:convergence}, we first study the convergence of the proposed \textbf{Algorithm \ref{alg:throughput}} for $N=20$. As can be seen from Fig. \ref{fig:convergence_throughput}, the throughput rapidly converges to a stationary value for all considered values of $K$. Moreover, Fig. \ref{fig:convergence_phase_shift} shows the absolute phase-shift difference $|\phi_{t,n} - \phi_{r,n}|$ for all $20$ elements. As can be observed, the phase-shift differences converge to $\frac{\pi}{2}$ or $\frac{3\pi}{2}$, i.e., $\cos(\phi_{t,n} - \phi_{r,n}) = 0$.

Next, we consider the following benchmark schemes for performance comparison. 
1) \textbf{STAR-RIS, Coupled, AO} \cite{niu2022weighted}: In this scheme, AO is exploited, where the coefficients of one side are optimized subject to the coupled phase-shift constraints by fixing the coefficients of the other side.
2) \textbf{STAR-RIS, PS-PSC, T} \cite{xu2022star}: This refers to the heuristic primary-secondary phase-shift configuration (PS-PSC) scheme, where the transmission side is primary for STAR-RIS. The STAR-RIS coefficients are obtained by fixing the optimal transmission coefficients and adjusting reflection coefficients such that the coupled phase-shift constraints are satisfied. 
3) \textbf{STAR-RIS, PS-PSC, R} \cite{xu2022star}: This scheme is similar to the previous scheme, but the reflection side is primary. 
4) \textbf{STAR-RIS, Independent} \cite{mu2021simultaneously}: In this scheme, the phase shift coefficients of the STAR-RIS can be independently adjusted. 
5) \textbf{Conventional RIS} \cite{huang2019reconfigurable}: In this scheme, there are two $\frac{N}{2}$-element reflect-only and transmit-only RISs deployed adjacent to each other.

In Fig. \ref{fig:rate_vs_elements}, we show the throughput versus the number of elements $N$ for different STAR-RIS optimization schemes when $K=6$. The results are obtained by averaging over $100$ random channel realizations. As can be observed, the proposed PDD-based optimization framework significantly outperforms the AO-based algorithm and the heuristic PS-PSC schemes for coupled phase-shift STAR-RISs. Moreover, Fig. \ref{fig:rate_vs_elements} also reveals that the coupled phase-shift model achieves almost the same performance as the independent one and achieves a significant performance gain over the conventional RIS.

\section{Conclusions}
We proposed a general optimization framework for STAR-RISs with coupled phase shifts, which gives the provable optimal solution under some mild conditions. Then, as a case study, we investigated throughput maximization based on the proposed optimization framework, where the KKT optimal solution was obtained. Our numerical results confirmed the effectiveness of the proposed optimization method. The proposed framework can be extended to STAR-RIS design in various network architectures.

\section*{Appendix~A: Proof of Proposition \ref{proposition_1}} \label{appendix:optimal_phase}
For any given $\tilde{\boldsymbol{\beta}}_t$ and $\tilde{\boldsymbol{\beta}}_r$, problem \eqref{problem:closed-form} can be decomposed into a series of independent optimization problems for each pair of $(\tilde{\psi}_{t,n}, \tilde{\psi}_{r,n})$, which leads to
\begin{subequations} \label{problem:phase-shift}
    \begin{align}
        \min_{\tilde{\psi}_{t,n}, \tilde{\psi}_{r,n}} \quad &
        \mathrm{Re}( \tilde{\vartheta}_{t,n}^* \tilde{\psi}_{t,n} ) + \mathrm{Re}( \tilde{\vartheta}_{r,n}^* \tilde{\psi}_{r,n} )\\[-0.5em]
        \label{constraint:appdeiex_2}
        \mathrm{s.t.} \quad 
        & \cos(\tilde{\phi}_{t,n} -  \tilde{\phi}_{r,n}) = 0, \\
        & |\tilde{\psi}_{t,n}| = 1, |\tilde{\psi}_{r,n}| = 1.
    \end{align}
\end{subequations}
Note that the coupled phase-shift constraint \eqref{constraint:appdeiex_2} can be rewritten as $| \tilde{\phi}_{t,n} -  \tilde{\phi}_{r,n} | = \frac{\pi}{2} \text{ or } \frac{3\pi}{2}$, which is equivalent to 
\begin{equation}
  \tilde{\psi}_{r,n} = j \tilde{\psi}_{t,n} \text{ or } \tilde{\psi}_{r,n} = -j \tilde{\psi}_{t,n}. 
\end{equation}
Substituting the above constraint into the objective function, problem \eqref{problem:phase-shift} can be further simplified as $\min_{|\tilde{\psi}_{t,n}| = 1} \mathrm{Re}\left( (\tilde{\vartheta}_{t,n}^* \pm j \tilde{\vartheta}_{r,n}^*) \tilde{\psi}_{t,n} \right)$, the optimal solution to which can be readily obtained as follows:
\begin{equation}
    \tilde{\psi}_{t,n} = e^{j \left(\pi - \angle (\tilde{\vartheta}_{t,n}^* \pm j \tilde{\vartheta}_{r,n}^*) \right)}.
\end{equation}
Comparing the objective values for the above two solutions, the optimal solution to problem \eqref{problem:phase-shift} can be obtained, which completes the proof.

\section*{Appendix~B: Proof of Proposition \ref{proposition_2}} \label{appendix:optimal_amplitude}
For any given $\tilde{\boldsymbol{\psi}}_t$ and $\tilde{\boldsymbol{\psi}}_r$, problem \eqref{problem:closed-form} can be decomposed into a series of independent optimization problems for each pair of $(\tilde{\beta}_{t,n}, \tilde{\beta}_{r,n})$, which leads to
\begin{subequations} \label{problem:separate_phase}
    \begin{align}
        \min_{\tilde{\beta}_{t,n}, \tilde{\beta}_{r,n}} \quad &
        \mathrm{Re}( \breve{\vartheta}_{t,n}^* \tilde{\beta}_{t,n}  ) + \mathrm{Re}( \breve{\vartheta}_{r,n}^* \tilde{\beta}_{r,n}  ) \\[-0.5em]
        \label{constraint:amplitude}
        \mathrm{s.t.} \quad & \tilde{\beta}_{t,n}^2 + \tilde{\beta}_{r,n}^2 = 1, 0 \le \tilde{\beta}_{t,n}, \tilde{\beta}_{r,n} \le 1.
    \end{align}
\end{subequations} 
Since $\tilde{\beta}_{i,n}$ is real-valued, we simplify the objective function as follows: $a_n \tilde{\beta}_{t,n} + b_n \tilde{\beta}_{r,n}$, where $a_n = |\breve{\vartheta}_{t,n}^*| \cos (\angle \breve{\vartheta}_{t,n}^*)$ and $b_n = |\breve{\vartheta}_{r,n}^*| \cos (\angle \breve{\vartheta}_{r,n}^*)$. To solve the simplified problem, polar coordinates are used, where we set $\tilde{\beta}_{t,n} = \sin \omega_n$ and $\tilde{\beta}_{r,n} = \cos \omega_n$ with $\omega_n \in [0, \frac{1}{2}\pi]$. Hence, constraint \eqref{constraint:amplitude} is automatically satisfied. Based on this transformation, the objective function can be rewritten as 
\begin{align}
    & a_n \sin \omega_n + b_n \cos \omega_n \nonumber \\
    = &\sqrt{a_n^2 + b_n^2} \left( \cos \xi_n \sin \omega_n + \sin \xi_n \cos \omega_n \right) \nonumber \\
    = &\sqrt{a_n^2 + b_n^2} \sin (\omega_n + \xi_n),
\end{align}
where $\cos \xi_n = \frac{a_n}{\sqrt{a_n^2 + b_n^2}}$ and $\sin \xi_n = \frac{b_n}{\sqrt{a_n^2 + b_n^2}}$. Then, the optimal solution in \eqref{eqn:optimal_omega_0} for minimizing $\sin (\omega_n + \xi_n)$ with respect to $\omega_n \in [0, \frac{1}{2}\pi]$ can be readily obtained, which completes the proof.

\section*{Appendix~C: Proof of Proposition \ref{proposition_3}} 
In this appendix, we show that the MFCQ condition \cite[Appendix C]{shi2020penalty2} is satisfied by problem \eqref{problem:WMMSE}. Note that for problem \eqref{problem:WMMSE}, optimizing $\tilde{\boldsymbol{\theta}}_i$ is equivalent to optimizing its amplitudes $\tilde{\boldsymbol{\beta}}_i = [\tilde{\beta}_{i,1}, \dots, \tilde{\beta}_{i,N}]$ and phase shifts $\tilde{\boldsymbol{\phi}}_i = [\tilde{\phi}_1, \dots, \tilde{\phi}_N]$. Thus, all equality constants in problem \eqref{problem:WMMSE} can be written as
\begin{subequations}
  \begin{align}
    &\boldsymbol{\mu}_1 = \tilde{\boldsymbol{\beta}}_t \circ \tilde{\boldsymbol{\beta}}_t + \tilde{\boldsymbol{\beta}}_r \circ \tilde{\boldsymbol{\beta}}_r - \mathbf{1} = \mathbf{0}, \\[-0.5em]
    &\boldsymbol{\mu}_2 = \cos(\tilde{\boldsymbol{\phi}}_t - \tilde{\boldsymbol{\phi}}_r) = \mathbf{0}, \\[-0.5em]
    &\boldsymbol{\mu}_3 = \tilde{\boldsymbol{\beta}}_t \circ e^{j \tilde{\boldsymbol{\phi}}_t} - \boldsymbol{\theta}_t = \mathbf{0}, \\[-0.5em]
    &\boldsymbol{\mu}_4 = \tilde{\boldsymbol{\beta}}_r \circ e^{j \tilde{\boldsymbol{\phi}}_r} - \boldsymbol{\theta}_r = \mathbf{0},
  \end{align}
\end{subequations}
where $\circ$ denotes the element-wise product, $\mathbf{1}$ is the all-ones vector, and $\mathbf{0}$ is the all-zeros vector. First, we show that the gradients of $\{\boldsymbol{\mu}_i\}_{i=1}^4$ with respect to the vector of variables $\boldsymbol{\omega} = \left[ \tilde{\boldsymbol{\beta}}_t^T, \tilde{\boldsymbol{\beta}}_r^T, \tilde{\boldsymbol{\phi}}_t^T, \tilde{\boldsymbol{\phi}}_r^T, \boldsymbol{\theta}_t^T, \boldsymbol{\theta}_t^T   \right]^T$ are linearly independent. It can be observed that since only $\boldsymbol{\mu}_4$ contains variable $\boldsymbol{\theta}_r$, the gradients of $\{\boldsymbol{\mu}_i\}_{i=1}^4$ are linearly independent if and only if those of $\{\boldsymbol{\mu}_i\}_{i=1}^3$ are linearly independent. Similarly, since only $\boldsymbol{\mu}_3$ contains variable $\boldsymbol{\theta}_t$, the gradients of $\{\boldsymbol{\mu}_i\}_{i=1}^3$ are linearly independent if and only if those of $\{\boldsymbol{\mu}_i\}_{i=1}^2$ are linearly independent. Furthermore, the gradients of $\{\boldsymbol{\mu}_i\}_{i=1}^2$ are linearly independent since $\boldsymbol{\mu}_2$ does not contain $\tilde{\boldsymbol{\beta}}_t$ and $\tilde{\boldsymbol{\beta}}_r$ but $\boldsymbol{\mu}_1$ does. As a consequence, the gradients of $\{\boldsymbol{\mu}_i\}_{i=1}^4$ must be linearly independent. Then, according to the definition of MFCQ \cite[Appendix C]{shi2020penalty2}, we are left to show that there exists a matrix $\mathbf{D}_{\mathbf{W}}$ and a vector $\mathbf{d}_{\boldsymbol{\omega}}$, such that
\begin{subequations}
  \begin{align}
    &\mathrm{Re}\{\mathrm{tr}(\mathbf{W} \mathbf{D}_{\mathbf{W}}^H )\} < 0, \\
    &\nabla \boldsymbol{\mu}_i \mathbf{d}_{\boldsymbol{\omega}} = \mathbf{0}, \forall i \in \{1,2,3,4\},
  \end{align}
\end{subequations}
where $\nabla \boldsymbol{\mu}_i$ denotes the Jacobian matrix of $\boldsymbol{\mu}_i$ with respect to ${\boldsymbol{\omega}}$. It can be readily proved that the above equations are satisfied by setting $\mathbf{D}_{\mathbf{W}} = -\mathbf{W}$ and $\mathbf{d}_{\boldsymbol{\omega}} = \mathbf{0}$, which completes the proof.

\bibliographystyle{IEEEtran}
\bibliography{mybib}

\end{document}